\newtheorem{theorem}{Theorem}
\newtheorem{definition}{Definition}
\newtheorem{Exa}{Example}
\newtheorem{model}{Model}
\newcommand{\ntett}		{N^{\triangle_1}}
\newcommand{\ntetttre}		{N^{\triangle_1}_3}
\newcommand{\ntetttow}		{N^{\triangle_1}_2}
\newcommand{\keywords}
{
  \small	
  \textbf{\textit{Keywords: }} 
}
\begin{document}
\title{Relative Clustering Coefficient}
\author{Elena Farahbakhsh Touli, Oscar Lindberg }

\maketitle
\setcounter{page}{0}

\begin{abstract}
In this paper, we relatively extend the definition of global clustering coefficient to another clustering, which we call it \emph{relative clustering coefficient}. The idea of this definition is to ignore the edges in the network that the probability of having an edge is $0$. Here, we also consider a model as an example that using relative clustering coefficient is better than global clustering coefficient for comparing networks and also checking the properties of the networks. 
\end{abstract}

\keywords{Global clustering coefficient, local clustering coefficient, relative clustering coefficient, networks, graphs}

\newpage
\section{Introduction}\label{intro}
Recently in the field of physics and statistics, networks and graphs are two interesting topics for example Internet, email, media, and social network, citation network, and so on  \cite{1,2,3,Pohcn, Newman2001Random}. One of the properties of graphs is clustering and one of the most important characteristics of networks is that they are highly clustered. It is easy to see that the probability that a person in Germany and a person in Iran make a friendship is so low, but the probability that in a small city in Iran two friends of a person become friends is so high. And this is one of the important characteristics of networks in the real life. In the topological view of the graph, a highly clustered network contains a lot of triangles or cycles of length three.  \cite{Cslocn}

Networks or graphs \cite{Cslocn,4,GTwA} contain a set of vertices and a relation between the vertices. The relations between two vertices are defined by edges between the vertices and an edge between two vertices is shown by a line s.t. connects the two vertices. If two vertices have a relation, we add an edge between them otherwise, we do not add any edge between them. If the relationship is one-sided we have a directed graph otherwise, if the relationship is two-sided we have underacted graph. For example, friendship on Facebook is a two-sided relation, therefore if someone sends a request on Facebook to us we both become friends, and we can see what our friends share with us. But, friendship on Instagram is one-sided. When someone sends a friendship request to us, until we do not follow him or her, they will not be our friends. Edges in directed graphs are shown by using a line with an arrow indicating in which direction we have the relationship \cite{Cslocn,4,GTwA}. Examples of directed and undirected graphs are shown in Figure\ref{fig:dirnotdir}.

 \begin{figure}[htbp]
\begin{center}
\begin{tabular}{ccc}
\includegraphics[width=3cm]{./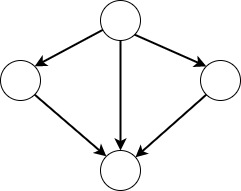} & \hspace*{0.3in}&
\includegraphics[width=3cm]{./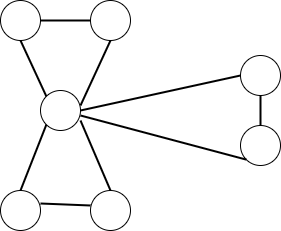} \\
(a) & & (b)
\end{tabular}
\end{center}
\vspace*{-0.25in}
\caption{(a) A directed graph.
(b) An undirected graph 
\label{fig:dirnotdir}}
\end{figure}

The clustering coefficient was defined by D.J. Watts, and S. H. Strogatz in 1998 \cite{Cdoswn} and it is used as a measure for comparing networks. We have two definitions for finding the clustering in the graph that we will talk about them in Section \ref{CC}; global clustering coefficient and local clustering coefficient. Using the definition described in \cite{Cdoswn}, we may define the local clustering coefficient as follows. For an individual $v$, we may consider the neighborhood of $v$ as the sub-graph containing only the neighbors of $v$. If this individual is connected to $k_v$ people, then there can exist at most  $k_v(k_v-1)/2$ friendships in the neighborhood. The local clustering coefficient for $v$ is the number of edges that exist in the neighborhood, divided by $k_v(k_v-1)/2$, i.e. the proportion of friendships that exist. However, the global clustering coefficient is calculated by the number of triangles divided by the number of triples that could make a triangle. But, after looking and thinking about some networks we see that these clustering coefficients are not sufficient for comparing all the networks and we need to establish and extend the clustering coefficients that we already have read about. And this was the idea of writing this paper.

 \noindent{\bf New work.} 
 In this paper, we extend the definition of clustering coefficient to a clustering that can indicate characteristics of networks better and we call it Relative Clustering Coefficient. At this definition, we consider only the edges in the network that we are allowed to add to the network. For example, if two people are in two different prisons they are not connected, although they may have the same lawyer. And if two people are in two different hospitals they are not connected, even though they may have the same doctor who works in those two hospitals.
 
 In this paper, we consider just simple undirected graphs. We will use and talk about some properties of graphs that we illustrate more here. They are bipartite graphs and cliques.
 \begin{figure}[htbp]
\begin{center}
\begin{tabular}{ccc}
\includegraphics[width=5cm]{./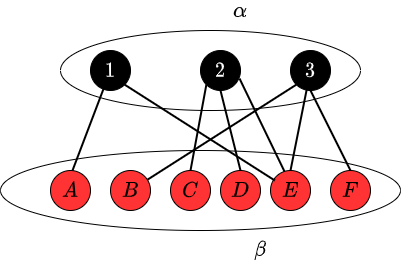} & \hspace*{0.3in}&
\includegraphics[width=4cm]{./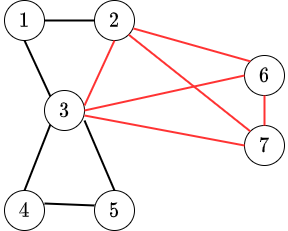} \\
(a) & & (b)
\end{tabular}
\end{center}
\vspace*{-0.25in}
\caption{(a) A bipartite graph with two parts $\alpha$ and $\beta$.
(b) A clique with vertices $2, 3, 6, 7$ and  red edges in the graph.
\label{fig:bicl}}
\end{figure} 
 
 {\bf{Bipartite Graphs}} \cite{GTwA}:In graph theory, a bipartite graph is a graph that we can divide vertices into two groups of vertices, such that there is no edge between the vertices in each group. (For more illustration look at Figure \ref{fig:bicl} part ($a$))
 
 {\bf Cliques} \cite{GTwA}: A complete graph is a simple graph (undirected graph without loops nor multiple edges) that includes all the possible edges between vertices or in other words, there is an edge between any two vertices of the graph. A subset of a graph is a graph that its vertices are the subset of the vertices of the main graph and edges are the subset of the edges of the main graph that connects vertices of the sub-graph. A clique is a sub-graph of a graph that is complete. (For more illustration look at Figure \ref{fig:bicl} part ($b$))

 Here the outline of this paper is as follows: in Section \ref{CC} we present the definition of clustering coefficients that we already had; global clustering coefficient and local clustering coefficient. We extend the definition of clustering coefficient at Section \ref{RCC}. We also present an example of a model and we use relative clustering coefficient instead of clustering coefficient. Last section which is Section \ref{C} is conclusion.

 \section{Clustering Coefficient}\label{CC}
 If we consider different networks in the real life we see that most of them are highly clustered, i.e. we can see that a friend of a friend of a person is a friend of the person as well. In another word, two friends of a person are with high probability friends in a way. From the topological view, we can see that there are lots of triangles in a network \cite{Pohcn, Cdoswn}.
 There are two definitions to measure the clustering in the network; local clustering coefficient and global clustering coefficient. 
 Here we want to illustrate these two definitions.
 
 \subsection{Local Clustering Coefficient}
 As it is defined in \cite{Cdoswn}, the local clustering coefficient for a vertex $v$ in a graph $G$ is defined as the number of triangles in the graph such that one vertex of the triangle is $v$ divided by the number of paths $\pi$ in $G$ with the length of $2$ such that the vertex $v$ is the middle vertex of $\pi$.

 In other words, for an undirected graph, if we consider $\nu_i$ as a set of vertices in the neighborhood of a vertex $v_i$, that means the set of vertices that there is an edge between $v_i$ and every vertex in the set, so we have that 
 $$\nu_i = \{v_i : e_{ij}\in E\}.$$
 
 Therefore, for a vertex $v_i$ we can define the local clustering coefficient $C_i$ as the number of edges between vertices in the set of $\nu_i$ divided by the number of edges that can exist between the vertices in the set $\nu_i$.

 Therefore, we can define the local clustering coefficient for a vertex $v_i$ as follows:

  \begin{definition}\cite{Pohcn} Local Clustering Coefficient
  
  For a vertex $v_i$, if $\nu_i$ is the set of vertices in the neighbourhood of $v_i$ and  $|\nu_i|$ is the size of this set, the local clustering $C_i$ for the vertex $v_i$ is defined as follows:
  
  \[
  C_i = \frac{|\{e_{jk}: v_j,v_k\in \nu_i ~\&~ e_{jk} \in E\}|}{\binom {|\nu_i|}{2}}
  \]
 \end{definition}
 
 We can define the local clustering coefficient for a directed graph in a similar way, but it is beyond the scope of this paper.
 
 \subsection{Global Clustering Coefficient}
 If we consider $N_\triangle$ as the number of triangles in the graph and $N_3$ as the number of sub-graphs containing three vertices that are connected at least by two edges (It means there is two edges between them or three edges, look at the Figure \ref{fig:cc}), the definition of clustering in a graph or network is as follows:
 \begin{figure}[htbp]
\begin{center}
\begin{tabular}{ccccc}
\includegraphics[width=3cm]{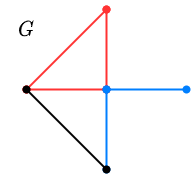} & \hspace*{0.3in}&
\includegraphics[width=1.5cm]{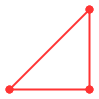} & \hspace*{0.35in}&
\includegraphics[width=1.5cm]{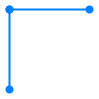} \\
(a) & & (b) & &  (c)
\end{tabular}
\end{center}
\vspace*{-0.25in}
\caption{(a) Graph $G$ and two of its sub-graphs with three vertices.
(b) The sub-graph with the color red is connected by three edges. (c) The sub-graph with the color blue is connected by two edges.
\label{fig:cc}}
\end{figure}

 \begin{definition}\cite{Pohcn} Global Clustering Coefficient
 
By using the notations $N_{\triangle}$ and $N_3$ that we defined earlier, the global Clustering Coefficient is defined as follows:

 \begin{equation*}
     C= \frac{3N_{\triangle}}{N_3}
 \end{equation*}
 \end{definition}
 
 After looking and considering other examples of networks we see that these definitions are not sufficient for considering high clustering in the network, so we define relative clustering coefficient. We consider the next section to illustrate and talk about the relative clustering coefficient.  Later we give an example that we see that considering the global clustering coefficient is not good enough.

 \section{Relative Clustering Coefficient (RCC)} \label{RCC}
 
 The idea of the definition of RCC is as follows that we just consider pairs of vertices that we can have an edge between them in the network. In other words, the probability of having an edge between them is larger than $0$.
 Here, for each pair of vertices, we define a capacity (a number, $0$ or $1$).  If we can have an edge between two vertices (Or if the probability of having an edge between two vertices is larger than $0$) the capacity is $1$, otherwise, the capacity is $0$.
 $N^{\triangle_1}$ is the number of all the triangles in the network that all the edges in the triangles have a capacity of  $1$, and  $N_3^{\triangle_1}$ is the number of triangles that the capacity of all the edges in it is $1$, such that all the edges in the triangle are in the network, and $N_2^{\triangle_1}$ is the number of triangles that the capacity of all edges in it is $1$, but just two of the edges in the triangle are in the network. Now, we define the relative clustering coefficient in a network as follows:
 
 \begin{definition}
 Relative Clustering Coefficient

By using the notation that we have illustrated earlier, we defined RCC as follows: $$C_R =\frac{ 3 N_3^{\triangle_1} }{ 3 N_3^{\triangle_1} +  N_2^{\triangle_1} }.$$
 \end{definition}

\begin{Exa}
If we consider a group of people ($A$ and $B$) in hospital number $1$ and two other people ($C$ and $D$) that are hospitalized in hospital number $2$, and the person $E$ a doctor that works with patients in both hospitals (For more information look at Figure \ref{fig:exampleccdif}) we have RCC and CC as follows:

$$C_R = 1,$$ while 
$$C<1.$$
But, in this case, we cannot add any edge to this network, because these two groups of people are separated and there is no physical contact between them. Therefore, the clustering coefficient should be $1$.

\begin{figure}[htbp]
\begin{center}
\begin{tabular}{ccc}
\includegraphics[width=4.5cm]{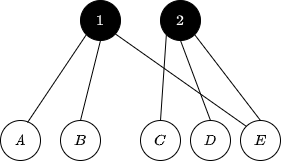} & \hspace*{0.2in}&
\includegraphics[width=4.5cm]{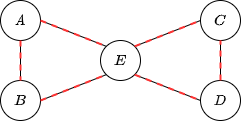} \\
(a) & & (b) 
\end{tabular}
\end{center}
\vspace*{-0.25in}
\caption{(a) A bipartite graph indicating that person $A$ and $B$ are hospitalized in hospital number $1$ and $C$ and  $D$ are hospitalized in hospital number $2$, and $E$ is working in both hospitals. 
(b) Full graph of the model, that contains all the possible edges. Edges with the capacity of $1$ are black. Dashed red lines indicate the edges in the network.
\label{fig:exampleccdif}}
\end{figure}
\end{Exa}

\subsection{RCC instead of CC}
In \cite{Pohcn} M. E. J. Newman illustrated a model for highly clustered networks. The model is as follows:

\begin{model}\label{model:1}\cite{Pohcn}

$\rightarrow{}$ We have $N$ individuals in total

$\rightarrow{}$ These individuals are divided into $M$ different groups.

$\rightarrow{}$
Individuals can belong to more than one group 

$\rightarrow{}$ 
Individuals belong to groups randomly

$\rightarrow{}$ If two individuals belong to one group with the probability of $p$ they are connected
otherwise they are not connected. 
\end{model}

To illustrate this model better, we use an example:

\begin{Exa}
We have some professors $($A$, $B$, $C$, $D$, $E$, $F$, $G$, $H$, $I$)$ who work at the University of Tehran. At the University of Tehran, we have some different departments ($1$, $2$, $3$, $4$, $5$) that each professor belongs to. Some professors work in different departments. Therefore, in meetings that are held in different departments that they participate in, they can meet other professors in the department. But, two professors who do not work in the same department, do not have any information from each other.  (For more information look at Figure \ref{fig:Example})

\begin{definition} Full Graph

If we construct a network that contains all the edges with the probability of $p>0$, we have a graph that we call it full graph. (See the lower figure in Figure \ref{fig:Example})
\end{definition}

\begin{figure}[htbp]
\begin{center}
\includegraphics[width=7cm]{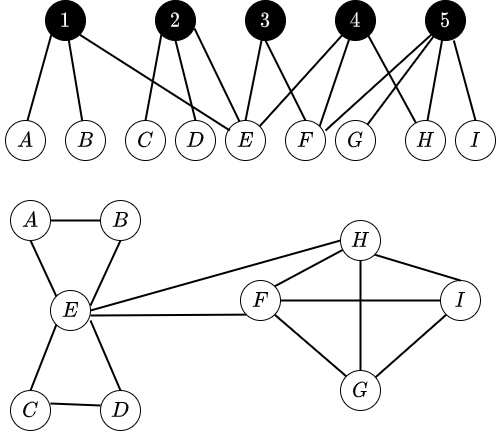} 
\end{center}
\vspace*{-0.25in}
\caption{The upper figure is a bipartite graph that indicates which professor belongs to which departments. Numbers indicate the departments and letters indicate the name of the professors. 
The lower figure is the full graph using this model. That is for example professor $E$ works for departments $1$, $2$, $3$, and $4$, therefore $E$ can know each people who work at those departments. But, there is no edge connecting $E$ to people who work at the department of $5$ that do not work in any of 
$1$ nor $2$ nor $3$ nor $4$.
\label{fig:Example}}
\end{figure}
\end{Exa}

For this model, M. E. J. Newman used the clustering coefficient and showed that 

\begin{equation*}
    C=pC'
\end{equation*}
such that $C'$ is the clustering of the full graph of the network. 
\\
\\
\noindent{\bf Reason for using RCC.~} 
Here, if we look at the full graph, we cannot add any edge to the network, so it is better if we use RCC instead of CC. Because it (the full graph) has all the edges between any pair of vertices and we are not allowed to add any other edges to the full graph.
Therefore, by using RCC we have the following theorem which is more reasonable to use for this example and model.

\begin{theorem}
  For the Model \ref{model:1} for large $n$, where $n$ is the number of vertex in the network,  we have that 
  $$C_R = p,$$ which $p$ is the probability of having an edge between two vertices in the network.
\end{theorem}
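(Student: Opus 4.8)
The plan is to recast both $\ntetttre$ and $\ntetttow$ as sums of products of edge-indicator variables taken over the set of ``allowed'' vertex triples, and then to obtain the claimed value by comparing expectations and invoking concentration as $n\to\infty$. First I would fix the group-membership structure of Model \ref{model:1}, which determines for every pair $\{a,b\}$ whether it has capacity $1$ (the two individuals share a group) or capacity $0$. Let $S$ be the set of triples $\{i,j,k\}$ all three of whose pairs have capacity $1$, and for each capacity-$1$ pair let $X_{ab}\in\{0,1\}$ be the indicator that the edge $e_{ab}$ is present. By the model, conditioned on the group structure these indicators satisfy $\mathbb{E}[X_{ab}]=p$, and for large $n$ they are (asymptotically) independent, since the fraction of pairs sharing more than one group is negligible. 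Observe that every edge that is present is automatically a capacity-$1$ edge, so $\ntetttre$ is exactly the number of triangles in the network.

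Next I would write
\[
\ntetttre=\sum_{\{i,j,k\}\in S}X_{ij}X_{ik}X_{jk},
\]
and $\ntetttow$ as the sum over the same triples of the three ``exactly two edges present'' terms $X_{ij}X_{ik}(1-X_{jk})+X_{ij}X_{jk}(1-X_{ik})+X_{ik}X_{jk}(1-X_{ij})$. The key algebraic step is the per-triple identity $3X_{ij}X_{ik}X_{jk}+\big[\text{two-edge terms}\big]=X_{ij}X_{ik}+X_{ij}X_{jk}+X_{ik}X_{jk}$, which yields
\[
3\,\ntetttre+\ntetttow=\sum_{\{i,j,k\}\in S}\big(X_{ij}X_{ik}+X_{ij}X_{jk}+X_{ik}X_{jk}\big).
\]
This exhibits the denominator as the number of length-two paths whose closing (third) edge has capacity $1$, and the numerator $3\,\ntetttre$ as the number of such paths that are in fact closed; thus $C_R$ is precisely the proportion of ``closeable'' cherries that are actually closed.

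Then I would take expectations. Each term of the denominator is a product of two independent capacity-$1$ edge-indicators, so $\mathbb{E}[3\,\ntetttre+\ntetttow]=3|S|p^2$, while $\mathbb{E}[3\,\ntetttre]=3|S|p^3$; the ratio of these expectations is exactly $p$. To upgrade this to the stated equality for large $n$, I would show that the numerator and the denominator each concentrate about their means: both are sums of $\Theta(|S|)$ bounded terms in which two terms are correlated only when they share a common edge-indicator, so a variance bound followed by Chebyshev's inequality gives numerator$/(3|S|p^3)\to 1$ and denominator$/(3|S|p^2)\to 1$ in probability, whence $C_R\to p$.

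The main obstacle I anticipate is this final concentration step rather than the expectation computation: I must control the variances of the two sums, which requires counting pairs of triples in $S$ that share an edge (and are therefore correlated), and I must make precise the sense in which the edge-indicators become independent with common mean $p$ as $n\to\infty$ --- in particular, arguing that $|S|\to\infty$ under nontrivial group assignments and that triples ``allowed'' through several shared groups contribute a vanishing fraction of $|S|$. Once the number of allowed triples grows and these correlations are shown to be of lower order, the ratio collapses to $p$.
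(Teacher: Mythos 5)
Your proposal is correct, and its core coincides with the paper's: both arguments work triple-by-triple over the capacity-one cliques, use that the three possible edges of such a triple appear independently with probability $p$ (so a triple is fully closed with probability $p^3$ and has exactly two edges with probability $3p^2(1-p)$), and then pass from these expectations to the random counts $N_3^{\triangle_1}$ and $N_2^{\triangle_1}$ by a concentration claim, yielding $C_R \to p^3/\bigl(p^3+p^2(1-p)\bigr)=p$. The differences are in packaging and in rigor, and they are worth noting. Your per-triple identity $3X_{ij}X_{ik}X_{jk}+\bigl[\text{two-edge terms}\bigr]=X_{ij}X_{ik}+X_{ij}X_{jk}+X_{ik}X_{jk}$ recasts the denominator $3N_3^{\triangle_1}+N_2^{\triangle_1}$ as the number of cherries (paths of length two) whose closing edge has capacity one, which reduces the expectation computation to comparing $3|S|p^3$ with $3|S|p^2$ and gives $C_R$ a transparent interpretation as the fraction of closeable cherries that are actually closed; the paper instead computes $P(E_i=3)$ and $P(E_i=2)$ separately and cancels. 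More importantly, your final step is sounder than the paper's: the paper asserts $N_3^{\triangle_1}=\theta(N^{\triangle_1}p^3)$ and $N_2^{\triangle_1}=\theta(3p^2(1-p)N^{\triangle_1})$ and then substitutes these as if they were equalities, but $\theta$-bounds hold only up to unspecified multiplicative constants and therefore cannot pin down the exact limit $p$; what is actually needed is the law-of-large-numbers statement that each count, divided by its expectation, tends to $1$, which is precisely what your Chebyshev argument (with the variance controlled by counting pairs of triples sharing an edge) supplies. You also flag, as the paper does not, that the edge indicators have mean exactly $p$ and are mutually independent only when the relevant pairs share a single group, so one must show that pairs sharing several groups form a vanishing fraction; executing that estimate, which depends on how group sizes and memberships scale with $n$, is the one piece of your plan still left to carry out.
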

\begin{proof}
For each of the $\ntett$ cliques of size three (that also have capacity $1$ for each edge), we define $E_i$ as the number of edges within clique number $i = 1,2, ..., \ntett$. The number of given one of this cliques in binomially distributed with three numbers of trials and probability of success p, therefore $E_i \sim bin(p,3), i = 1,..., \ntett$.

The probability of all three edges existing within one of these cliques is 

\begin{equation*}
    P(E_i=3) = p^3
\end{equation*}
while the probability of exactly two edges existing with one of these cliques is
\begin{equation*}
    P(E_i=2) = 3p^2(1-p).
\end{equation*}

We define indicator variables as follows: 
\begin{equation*}
I(E_i=3) =
    \begin{cases}
      1, & \text{if}\ E_i=3 \\
      0, & \text{otherwise}
    \end{cases}
\end{equation*}

\[
I(E_i=2) =
    \begin{cases}
      1, & \text{if}\ E_i=2 \\
      0, & \text{otherwise}.
    \end{cases}
\]

Therefore, these indicator variables will have Bernolli distributions and expectations, so we have:
\begin{align*}
I(E_i =3 )&\sim be(p^3) \text{ has expectation } E[I(E_i = 3)] = p^3\\
I(E_i =2)&\sim be(3p^2(1-p)) \text{ has expectation } E[I(E_i = 2)] = 3p^2(1-p).
\end{align*}

Asymptotically we have that for large enough $n$ 
\begin{align*}
\ntetttre &= \sum_{i=1}^{\ntett} I(E_i = 3) = \theta(\ntett.E[I(E_i = 3)]) = \theta(\ntett.p^3)\\
\ntetttow &= \sum_{i=1}^{\ntett}I(E_i = 2) = \theta(\ntett.E[I(E_i = 2)]) = \theta(\ntett.3p^2(1-p))
\end{align*}
where the notation $\theta$ is defined in Definition \ref{def:theta}.

Therefore, for large enough $n$ the following relative clustering coefficient will asymptotcally go towards
\[
C_R = \frac{3\ntetttre}{3\ntetttre + \ntetttow} = \frac{3p^3\ntett}{3p^3\ntett + 3p^2(1-p)\ntett} = \frac{p}{p+(1-p)} = p.
\]
\end{proof}
 We can use a similar definition for defining relative local clustering coefficient as well. 

\begin{definition}\label{def:theta}\cite{ItA}
For two functions $f(n)$ and $g(n)$ we say that $f(n)=\theta g(n)$, if there exist two constant numbers $c_1$ and $c_2$ and an integer numbers $n'$ such that for all $n>n'$, we can write:
\[
c_1 g(n) \leq f(n)\leq c_2 g(n). 
\]
\end{definition}

\section{Conclusion}\label{C}
In this paper, we extend the definition of clustering coefficient to another definition which we called it relative clustering coefficient. This coefficient can measure the properties of networks better. In section \ref{CC} we defined two clustering coefficients, one local clustering, and another global clustering coefficient. We extended the definition of clustering coefficient in section \ref{RCC} and we used the clustering for measuring the property of a model as an example.

\newpage
\bibliography{bibil}

\begin{thebibliography}{10}

\bibitem{2}
R\'eka Albert and Albert-L\'aszl\'o Barab\'asi.
\newblock Statistical mechanics of complex networks.
\newblock {\em Reviews of Modern Physics}, 74(1):47--97, January 2002.

\bibitem{4}
Béla Bollobás.
\newblock {\em Modern Graph Theory}.
\newblock Graduate Texts in Mathematics 184. Springer-Verlag New York, 1
  edition, 1998.

\bibitem{GTwA}
J.~A. Bondy and U.~S.~R. Murty.
\newblock {\em Graph Theory with Applications}.
\newblock Elsevier, New York, 1976.

\bibitem{Cslocn}
{Caldarelli, G.}, {Pastor-Satorras, R.}, and {Vespignani, A.}
\newblock Structure of cycles and local ordering in complex networks.
\newblock {\em Eur. Phys. J. B}, 38(2):183--186, 2004.

\bibitem{ItA}
Thomas~H. Cormen, Charles~E. Leiserson, Ronald~L. Rivest, and Clifford Stein.
\newblock {\em Introduction to Algorithms}.
\newblock The MIT Press, 2nd edition, 2001.

\bibitem{3}
S.~N. Dorogovtsev and J.~F.~F. Mendes.
\newblock Evolution of networks with aging of sites.
\newblock {\em Physical Review E}, 62(2):1842–1845, Aug 2000.

\bibitem{Pohcn}
M.~E.~J. Newman.
\newblock Properties of highly clustered networks.
\newblock {\em Phys. Rev. E}, 68:026121, Aug 2003.

\bibitem{Newman2001Random}
M.~E.~J. Newman, S.~H. Strogatz, and D.~J. Watts.
\newblock Random graphs with arbitrary degree distributions and their
  applications.
\newblock {\em Phys. Rev. E}, 64(2):026118, July 2001.

\bibitem{1}
Steven~H. Strogatz.
\newblock {Exploring complex networks}.
\newblock {\em Nature}, 410(6825):268--276, March 2001.

\bibitem{Cdoswn}
Duncan~J. Watts and Steven~H. Strogatz.
\newblock Collective dynamics of ‘small-world’ networks.
\newblock {\em Nature}, 393(6684):440--442, 1998.

\end{thebibliography}

\end{document}